\newtheorem{defi}{Definition}
\newtheorem{assp}{Assumption}
\newtheorem{prop}{Proposition}
\newtheorem{remark}{Remark}
\newcolumntype{I}{!{\vrule width 1.5pt}}
\newlength\savedwidth
\newcommand{\sinr}{\textrm{SINR}}
\begin{document}

\title{On/Off Macrocells and Load Balancing in Heterogeneous Cellular Networks}

\author{\IEEEauthorblockN{Qiaoyang Ye$^*$, Mazin Al-Shalash$^\dagger$, Constantine Caramanis$^*$, and Jeffrey G. Andrews$^*$}

\IEEEauthorblockA{$\ ^*$Dept. of ECE, The University of Texas at Austin\\
$\ ^\dagger$Huawei Technologies\\
Email: qye@utexas.edu, mshalash@huawei.com, constantine@utexas.edu,  jandrews@ece.utexas.edu}}

\maketitle

\begin{abstract}

The rate distribution in heterogeneous networks (HetNets) greatly benefits from load balancing, by which mobile users are pushed onto lightly-loaded small cells despite the resulting loss in SINR. This offloading can be made more aggressive and robust if the macrocells leave a fraction of time/frequency resource blank, which reduces the interference to the offloaded users.  We investigate the joint optimization of this technique -- referred to in 3GPP as enhanced intercell interference coordination (eICIC) via almost blank subframes (ABSs) -- with offloading in this paper.  Although the joint cell association and blank resource (BR) problem is nominally combinatorial, by allowing users to associate with multiple base stations (BSs), the problem becomes convex, and upper bounds the performance versus a binary association.  We show both theoretically and through simulation that the optimal solution of the relaxed problem still results in an association that is mostly binary. The optimal association differs significantly when the macrocell is on or off; in particular the offloading can be much more aggressive when the resource is left blank by macro BSs.  Further, we observe that jointly optimizing the offloading with BR is important.  The rate gain for cell edge users (the worst 3-10\%) is very large -- on the order of 5-10x -- versus a naive association strategy without macrocell blanking.
\end{abstract}

\IEEEpeerreviewmaketitle

\section{Introduction}
Future wireless networks are evolving to become heterogeneous, with the proliferation of  small base stations (BSs) such as picocells and femtocells which differ significantly in terms of transmit power, coverage area, physical size, and other capabilities \cite{GhoAnd12}.  In particular, the massive differences in the coverage areas of different cells makes load balancing very important.  Without load balancing -- that is, where users simply associate with the strongest downlink pilot signal -- macrocells remain the bottleneck and small cells are extremely underutilized \cite{Bje11,DamMon11}.  However, load balancing results in the offloaded users experiencing not only a weaker received signal, but also stronger interference.  

This motivates the straightforward idea of leaving certain time/frequency resource of the macrocells blank, during which the offloaded users can receive much higher SINR from the small cells.  Although this decreases the time-frequency resources available to the remaining macrocell users, if there is enough parallelism in the shorter range small cell transmissions (which now have higher SINR and thus rate), this loss can be overcome, and indeed the net gain can be fairly significant~\cite{LopGuv11}.  For example, in 3GPP, these ``off'' periods  are called \emph{almost blank subframes} (ABSs), during which only the reference signals are transmitted (to allow mobiles to still monitor and synchronize with the macro BS), while the rest of the resource blocks (RBs) in the frame are unused.  Such a scheme motivates a few fundamental questions.  What fraction of the resource should the BSs leave blank?  How should users associate during each of the ``on'' and ``off'' periods as a function of the small cell density and other system parameters?  What is the best-case gain of such an approach?

\textbf{Related Work.}
There has been some recent effort to study these questions. In our recent paper \cite{YeRon12}, an optimal user association  to achieve load balancing in HetNets is found, but without consideration of eICIC. A heuristic algorithm for cell association and resource partitioning was proposed in \cite{MadBor10}, which is seen to improve the performance in HetNets. In \cite{JoSan11}, a tractable framework  for SINR analysis in HetNets with range expansion is proposed. Paper \cite{GuvJeo11} studied the impact of range expansion and proposed a heuristic cell selection scheme, which provides a compromise between sum capacity and fairness. Another cell selection scheme with a given ABS ratio is proposed in \cite{OhHan12}, which is based on the SINR without consideration of the load of BSs. Nevertheless, how to jointly optimize the user association and allocation of blank resource (BR) or other eICIC approaches is still an open issue.

\textbf{Contributions.}
Finding a true performance bound for joint optimization of  load-aware user association and eICIC is a very challenging problem due to the coupled relationship between the users' association, scheduling, and eICIC. This paper extends the framework in~\cite{YeRon12} to the eICIC case, where all macro BSs send blank RBs synchronously and for the same fraction of resource. The joint optimization is combinatorial if users can only associate with one BS, but if this constraint is relaxed to allow users to associate for a fraction of time with different BSs, the resulting problem turns out to be convex.  It also upper bounds the network utility with the binary association. Further, we prove that the number of users associated with multiple BSs is quite limited, and is in fact smaller than the number of BSs. Therefore, a binary association would be expected to have comparable performance. 

We then turn our attention to the optimal user association during the two different phases.  We demonstrate that the optimal association for the On and Off periods is very different, with much more load balancing achieved during the Off periods. The fraction of blank (Off) time/frequency RBs is found to increase with the number of picocells in the network.  For example, if there are 6-10 picocells per macro, then the macrocell should be Off about half the time.  The gains from jointly optimizing the load balancing with the BR is quite large, while without an appropriately modified association, the gain from introducing BR is limited.
\section{System Models}\label{sec:model}

We consider a downlink HetNet with $K$ tiers of BSs, where each tier models a typical type of BSs. We consider a synchronous configuration, where each macro BS has the same blank RBs. We jointly optimize the duty cycle of BR and the corresponding user association.  The sets of all BSs and users are denoted by $\mathcal{B}$ and $\mathcal{U}$ with size $N_B$ and $N_U$, respectively. Let $\mathcal{B}_1\in\mathcal{B}$ be the set of marcocell BSs, with size $N_{B_1}$. The SINR of user $i$ from BS $j$ in normal (On) RBs is
\begin{equation}\label{eq:SINRon}
\text{SINR}_{ij}^{(n)}=\frac{P_j h_{ij}}{\sum_{n\in\mathcal{B}/j} P_n h_{in}+\sigma^2}, \quad \forall i\in\mathcal{U}, j\in\mathcal{B},
\end{equation}
while the SINR of user $i$ from BS $j$ in blank (Off) RBs is
\begin{equation}\label{eq:SINRoff}
\text{SINR}_{ij}^{(b)}=\left\{
\begin{aligned}
& \frac{P_j h_{ij}}{\sum_{n\in\mathcal{B}/(\mathcal{B}_1\cup j)} P_n h_{in}+\sigma^2}, \quad \forall j\in\mathcal{B}/\mathcal{B}_1,\\
& 0, \quad \forall i\in\mathcal{U}, j\in\mathcal{B}_1,
\end{aligned}
\right.
\end{equation}
where $P_j$ denotes the transmit power of BS $j$, $h_{ij}$ is the channel gain of the link from BS $j$ to user $i$, and $\sigma^2$ is the noise power level. The channel gain includes path loss, shadowing and antenna gain. In this paper, we assume a static channel during each resource allocation period, which is applicable for low mobility environments. Stochastic channel analysis is left as future work.


We denote by $c_{ij}^{(n)}$ and $c_{ij}^{(b)}$ the spectral efficiency of user $i$ from BS $j$ in normal  and blank  RBs, respectively.  Generally, spectral efficiency is a logarithmic function of SINR (e.g., $c_{ij}^{(n)}=\log\left(1+\sinr_{ij}^{(n)} \right)$). Denoting the fraction of resources allocated from BS $j$ to user $i$ in normal and blank RBs by $s^{(n)}_{ij}$ and $s_{ij}^{(b)}$, respectively, where  $\sum_{i\in\mathcal{U}} s_{ij}=1$, we can define the long-term rate as follows.

\begin{defi}\label{def:rate}
The long-term rate of user $i$ from BS $j$ is
\begin{equation}
R_{ij}=(1-z)s_{ij}^{(n)}c_{ij}^{(n)}+zs_{ij}^{(b)}c_{ij}^{(b)},
\end{equation}
where $z$ is the fraction of blank RBs. The overall rate of user $i$, denoted by $R_i$, can be calculated according to $R_i=\sum_{j\in\mathcal{B}} R_{ij}$.
\end{defi}

In the following section, we investigate a utility maximization problem in terms of the long-term rate $R_{i}$ to find the optimal blank RB ratio, and the corresponding optimal user association.

\section{Problem Formulation}\label{sec:formulation}

The resource allocation variables $s_{ij}^{(n)}$ and $s_{ij}^{(b)}$ also indicate the association (i.e., user $i$ is associated with BS $j$ in normal RBs when $s_{ij}^{(n)}>0$). Typically, each user will be served by at most one BS, i.e., $\sum_j \mathbf{1}_{\{ s_{ij}^{(n)} >0\}}\leq1$ and $\sum_j \mathbf{1}_{\{ s_{ij}^{(b)} >0\}}\leq 1$, termed  ``single association'' in this paper. The single association constraint makes the problem combinatorial, and thus difficult to solve. Though it may not be viable in practice to allow users to be served by multiple BSs at the same time, we relax the single association constraint and thus make the problem convex, which can serve as an upper bound to benchmark the performance. In the remaining of this paper, we make the following assumption.
\begin{assp}\label{as:more}
Users can be jointly served by more than one BS at the same time.
\end{assp}
We call users associated with multiple BSs ``fractional users'' and the relaxed association ``fractional user association''. Under the above assumption, the single association constraint is relaxed, and the resulting optimization problem is:
\begin{equation}\label{eq:generalutility}
\begin{aligned}
 \max\limits_{s^{(n)},s^{(b)},z} & \quad \sum\limits_{i\in \mathcal{U}} U_{i}(R_i)\\
 \textrm{s.t.} & \quad \sum_i s_{ij}^{(n)}\leq 1, \ \forall j,\\
 & \quad \sum_i s_{ij}^{(b)}\leq 1, \ \forall j,\\
 &\quad s_{ij}^{(n)}, s_{ij}^{(b)}\in [0,1] ,\  \forall i,j\\
 &\quad z\in [0,1], \ \forall j,
\end{aligned}
\end{equation}
where $U_i(\cdot)$ is a continuously differentiable, and strictly concave utility function~\cite{StaWic09}. We adopt a logarithmic utility function, which naturally achieves load balancing and can be viewed as a sort of proportional fairness~\cite{YeRon12}. Changing $x_{ij}=zs_{ij}^{(n)}$ and $y_{ij}=(1-z)s_{ij}^{(b)}$, the optimization problem~(\ref{eq:generalutility}) is equivalent to
\begin{equation}\label{eq:abscvx}
\begin{aligned}
 \max\limits_{x,y,z} & \quad \sum\limits_{i\in \mathcal{U}} \log\left(\sum_{j\in\mathcal{B}} \left(x_{ij}c_{ij}^{(n)}+y_{ij}c_{ij}^{(b)}\right)\right)\\
 \textrm{s.t.} & \quad \sum_{i\in \mathcal{U}} x_{ij}\leq 1-z, \ \forall j,\\
 & \quad \sum_{i\in \mathcal{U}} y_{ij}\leq z, \ \forall j,\\
 &\quad x_{ij}, y_{ij},z\in [0,1] ,\  \forall i,j.\\
\end{aligned}
\end{equation}

\begin{prop}\label{prop:ABScvx}
The optimization problem (\ref{eq:abscvx}) is convex.
\end{prop}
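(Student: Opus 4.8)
The plan is to verify the two defining ingredients of a convex program in maximization form: a concave objective and a convex feasible set. Both turn out to be straightforward, because the only nonlinearity is the outer logarithm, and under the static-channel assumption the spectral efficiencies $c_{ij}^{(n)}$ and $c_{ij}^{(b)}$ are fixed constants rather than functions of the decision variables.

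First I would check the constraints. Each resource constraint $\sum_{i} x_{ij} \le 1-z$ and $\sum_{i} y_{ij} \le z$ can be rewritten as $\sum_{i} x_{ij} + z \le 1$ and $\sum_{i} y_{ij} - z \le 0$, which are affine inequalities in the decision variables $(x,y,z)$; the box constraints $x_{ij}, y_{ij}, z \in [0,1]$ are affine as well. The feasible set is therefore an intersection of finitely many half-spaces, i.e.\ a polyhedron, and hence convex.

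Next I would establish concavity of the objective. Fixing a user $i$ and writing $g_i(x,y)=\sum_{j\in\mathcal{B}}\left(x_{ij}c_{ij}^{(n)}+y_{ij}c_{ij}^{(b)}\right)$, the constants $c_{ij}^{(n)},c_{ij}^{(b)}$ make $g_i$ an affine (indeed linear) map of $(x,y)$. The scalar logarithm is concave, and the composition of a concave function with an affine map preserves concavity, so $\log\big(g_i(x,y)\big)$ is concave; this function does not depend on $z$ and is therefore jointly concave in $(x,y,z)$. Summing over $i\in\mathcal{U}$ preserves concavity, so the full objective is concave. Maximizing a concave function over a convex set is by definition a convex optimization problem, which yields the claim.

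The one point requiring care is the domain of the logarithm, since the objective is finite only where each $g_i(x,y)>0$. I would handle this by treating $\log$ as an extended-real-valued concave function equal to $-\infty$ for nonpositive arguments, which remains concave and leaves the composition argument intact; equivalently, any point at which user $i$ receives positive rate has $g_i>0$, so the effective domain is the open region on which the composition rule applies directly. This is the only subtlety, but it is not a genuine obstacle to the result.
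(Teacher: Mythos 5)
Your proof is correct, but it takes a different route from the paper's. The paper verifies concavity by computing the Hessian of the objective explicitly: it is block-diagonal over users, with each block equal to $-\mathbf{c}^T\mathbf{c}/\big(\sum_k(x_{ik}c_{ik}^{(n)}+y_{ik}c_{ik}^{(b)})\big)^2$, a negative-semidefinite rank-one matrix, so the objective is concave; the linear constraints then give convexity of the program. You instead invoke the composition rule (a concave scalar function of an affine map is concave) and the closure of concavity under sums, which reaches the same conclusion with no differentiation at all and is arguably cleaner --- it also lets you address the domain of the logarithm explicitly, a point the paper's Hessian argument silently assumes by working in the interior where each user's rate is positive. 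What the paper's computation buys in exchange for the extra work is the explicit rank-one structure of each Hessian block: this makes visible exactly why the objective, while strictly concave in the rates $R_i$, is \emph{not} strictly concave in $(x,y)$, which is the issue the paper must then confront in Proposition~\ref{prop:uniquesol} on uniqueness of the optimal allocation. Your argument establishes convexity just as rigorously but does not surface that degeneracy. Either proof is acceptable for the proposition as stated.
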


\begin{proof}
Denote the objective function in (\ref{eq:abscvx}) by $g(x,y)$. We will use Hessian matrix to check its convexity. The Hessian has the form
\begin{equation}\label{eq:hessmatrix}
\nabla^2 g=-
\begin{bmatrix}
 B_{1} & 0  & \cdots & 0 \\
 0 & B_{2}  & \cdots & 0 \\
 \vdots  &\vdots & \ddots   &\vdots \\
 0 & 0 & 0  & B_{N_U}\\
 \end{bmatrix}.
 \end{equation}

The matrix $B_i$ can be expressed as
\begin{equation}
 B_i=\frac{\mathbf{c}^T\mathbf{c}}{\left(\sum_k \left(x_{ik}c_{1_{ik}}+y_{ik}c_{2_{ik}}\right)\right)^2},
\end{equation} 
where $\mathbf{c}=\left[c_{1_{i1}}, c_{1_{i2}}, \cdots,  c_{1_{iN_B}}, c_{2_{i1}}, c_{2_{i2}}, \cdots,  c_{2_{iN_B}}  \right]$.

Therefore, the matrix $B_i$ is positive semi-definite (PSD) for all $i$, and thus $-\nabla^2 g$ is also PSD. The problem (\ref{eq:abscvx}) has a concave objective function with linear constraints, which implies that (\ref{eq:abscvx}) is a convex optimization.
\end{proof}

Though the objective function in (\ref{eq:abscvx}) is \emph{strictly} concave with respect to $R_i$, it is not strictly concave with respect to $x$ and $y$. We have the following proposition which shows the uniqueness of optimal solution.
\begin{prop}\label{prop:uniquesol}
The optimization problem (\ref{eq:abscvx}) has a unique optimal resource allocation (i.e., unique $x_{ij}^*$ and $y_{ij}^*$) almost surely. If $\sum_i \left(x_{ij}^*+y_{ij}^*\right)=1$, then the optimal solution of problem (\ref{eq:abscvx}) is unique (i.e., $z^*$ is also unique).
\end{prop}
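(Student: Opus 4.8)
The plan is to split the statement into three parts: uniqueness of the optimal rate vector, uniqueness of the allocation $(x^*,y^*)$, and uniqueness of $z^*$ under the stated fullness hypothesis. Throughout I use that the objective is concave (Proposition~\ref{prop:ABScvx}) and that the feasible set is convex and compact, so an optimum exists.

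First I would establish that the optimal rate vector $R^*=(R_1^*,\dots,R_{N_U}^*)$ is unique. Because each $R_i=\sum_{j}(x_{ij}c_{ij}^{(n)}+y_{ij}c_{ij}^{(b)})$ is affine in $(x,y)$ and the feasible set is convex, the midpoint of two optima is feasible and its rate vector is the average of the two rate vectors. If two optima produced different rate vectors, then strict concavity of $\log$ applied componentwise would make the midpoint strictly increase $\sum_i\log R_i$ (note the objective does not involve $z$ at all), contradicting optimality. Hence $R^*$ is unique, and since the logarithmic utility forces $R_i^*>0$ for every $i$, each optimal rate is strictly positive.

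Next, given the unique $R^*$, I would recover $(x^*,y^*)$ from complementary slackness. Writing $\mu_j^{(n)},\mu_j^{(b)}$ for the multipliers of the two resource constraints, stationarity gives $c_{ij}^{(n)}/R_i^*\le\mu_j^{(n)}$ with equality whenever $x_{ij}^*>0$, and $c_{ij}^{(b)}/R_i^*\le\mu_j^{(b)}$ with equality whenever $y_{ij}^*>0$. Fixing one optimal dual, the saddle-point property of convex programs guarantees that every optimal primal satisfies these relations, so the support of each user's allocation is confined to the ``tight'' slots where equality holds. A user employing two distinct slots would then force an exact relation such as $c_{ij}^{(n)}\mu_{j'}^{(n)}=c_{ij'}^{(n)}\mu_j^{(n)}$, or the cross-phase identity $c_{ij}^{(n)}\mu_k^{(b)}=c_{ik}^{(b)}\mu_j^{(n)}$, i.e.\ a coincidence among spectral efficiencies. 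Since the $c_{ij}^{(n)},c_{ij}^{(b)}$ inherit a continuous distribution from the random shadowing in $h_{ij}$, such coincidences occur with probability zero, so almost surely each user is served on a single slot; the unique rate then pins the lone nonzero variable down exactly, e.g.\ $x_{ij}^*=R_i^*/c_{ij}^{(n)}$, yielding a unique $(x^*,y^*)$. The main obstacle is making this almost-sure claim rigorous, since the multipliers $\mu_j^{(\cdot)}$ are themselves functions of the channel realization, so a tie is not obviously measure zero in the raw variables $h_{ij}$. I would handle this by conditioning on the finitely many possible active-set/support patterns: for each fixed pattern the optimality equations express the candidate ties as nontrivial analytic relations among the $c_{ij}^{(\cdot)}$, whose zero sets have Lebesgue measure zero, and a finite union of null sets is null.

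Finally, for $z^*$, suppose $\sum_i(x_{ij}^*+y_{ij}^*)=1$ for some BS $j$. Feasibility gives $\sum_i x_{ij}^*\le 1-z^*$ and $\sum_i y_{ij}^*\le z^*$, whose sum is at most $1$; the hypothesized equality therefore forces both constraints to bind, so $z^*=\sum_i y_{ij}^*$. Since $y^*$ is already unique, $z^*$ is uniquely determined, which completes the argument.
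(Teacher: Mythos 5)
Your first and third steps are fine and essentially match the paper: uniqueness of the rate vector $R^*$ follows from strict concavity of $\log$ composed with the affine map $(x,y)\mapsto R$, and your derivation of $z^*$ from the fact that the two binding constraints must both be tight is a correct (and more explicit) version of the paper's one-line argument. The gap is in your second step. The claim that almost surely each user is served on a single slot is false, and it contradicts Proposition \ref{prop:kkt} and the simulations, both of which exhibit fractional users (few of them, at most $N_B-1$, but generically present). The stationarity relation for a single user $i$ served on two slots $j,j'$ reads $c_{ij}^{(n)}/c_{ij'}^{(n)}=\mu_j^{(n)}/\mu_{j'}^{(n)}$; since the multipliers are free dual variables (one per BS) that adjust to the channel realization, this is one equation matched by one endogenous degree of freedom, not a measure-zero coincidence among the $c$'s. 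The genericity argument only eliminates configurations in which the multipliers can be cancelled out, i.e.\ two users sharing the same pair of BSs or, more generally, a cycle in the association graph --- which is exactly what the proof of Proposition \ref{prop:kkt} establishes. Your proposed repair (conditioning on support patterns) does not rescue the step, because for a lone fractional user the ``tie'' is not a nontrivial analytic relation among the $c_{ij}^{(\cdot)}$ alone, so its zero set is not null.

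Consequently your formula $x_{ij}^*=R_i^*/c_{ij}^{(n)}$ pins down only the single-association users, and uniqueness of the allocation for the fractional users is left unproved. The paper closes this differently: having excluded cycles almost surely, the fractional users and their serving BSs form a forest; combined with the already-unique $R_i^*$ and the tight resource constraints at the shared BSs (complementary slackness where $\lambda_j>0$), the fractional allocations satisfy a linear system with tree structure that is solved uniquely leaf-by-leaf --- the bipartite-graph argument the paper delegates to \cite{GajHua09}. You would need to add this step, or an equivalent one, to complete the proof.
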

\begin{proof}\label{pf:prop-uniquesol}
The proof includes two basic steps. In the first step, $U(R_i)$ is strictly concave in $R_i$ and thus we have unique solution $R_i^*$. The users with single association can be obtained uniquely. The second step is to show that the associations of fractional users can also be uniquely generated from $R_i^*$. This can be proved by bipartite graphs. Details can be found in \cite{GajHua09}. When $\sum_i \left(x_{ij}^*+y_{ij}^*\right)=1$, we have $\sum_{i\in \mathcal{U}} x^*_{ij}=1-z^*$ and thus $z^*$ is also unique.
\end{proof}

Returning to Assumption \ref{as:more}, what is the impact of the relaxation on the optimal solution? To answer this question, we first use a graph to represent the association, and then by applying  \textit{Karush-Kuhn-Tucker} (KKT) conditions, we show that the impact is limited.

In the graph representation of association, the nodes correspond to the users in HetNets, while the edges correspond to the BSs shared between the connected users, illustrated in Fig. \ref{fig:association}. Each node has a unique ID from $1$ to $N_U$, which is the indicator of users, and each edge has a color from $1$ to $N_B$ for BS identification. For example, in Fig. \ref{fig:associate3}, user $i$ is associated with both BS $j$ and $n$, and user $m$ is jointly served by BS $j$ and $k$. Note that the graph is not necessarily connected. The number of isolated subfigures depends on the number of fractional users. Another important property of the representation graph is that it is comprised of several connected/isolated \textit{complete graphs}.

\begin{figure}
\centering
\setcounter{subfigure}{0}
\subfigure[Example of three users.]{\label{fig:associate3}\includegraphics[width=4cm, height=4cm]{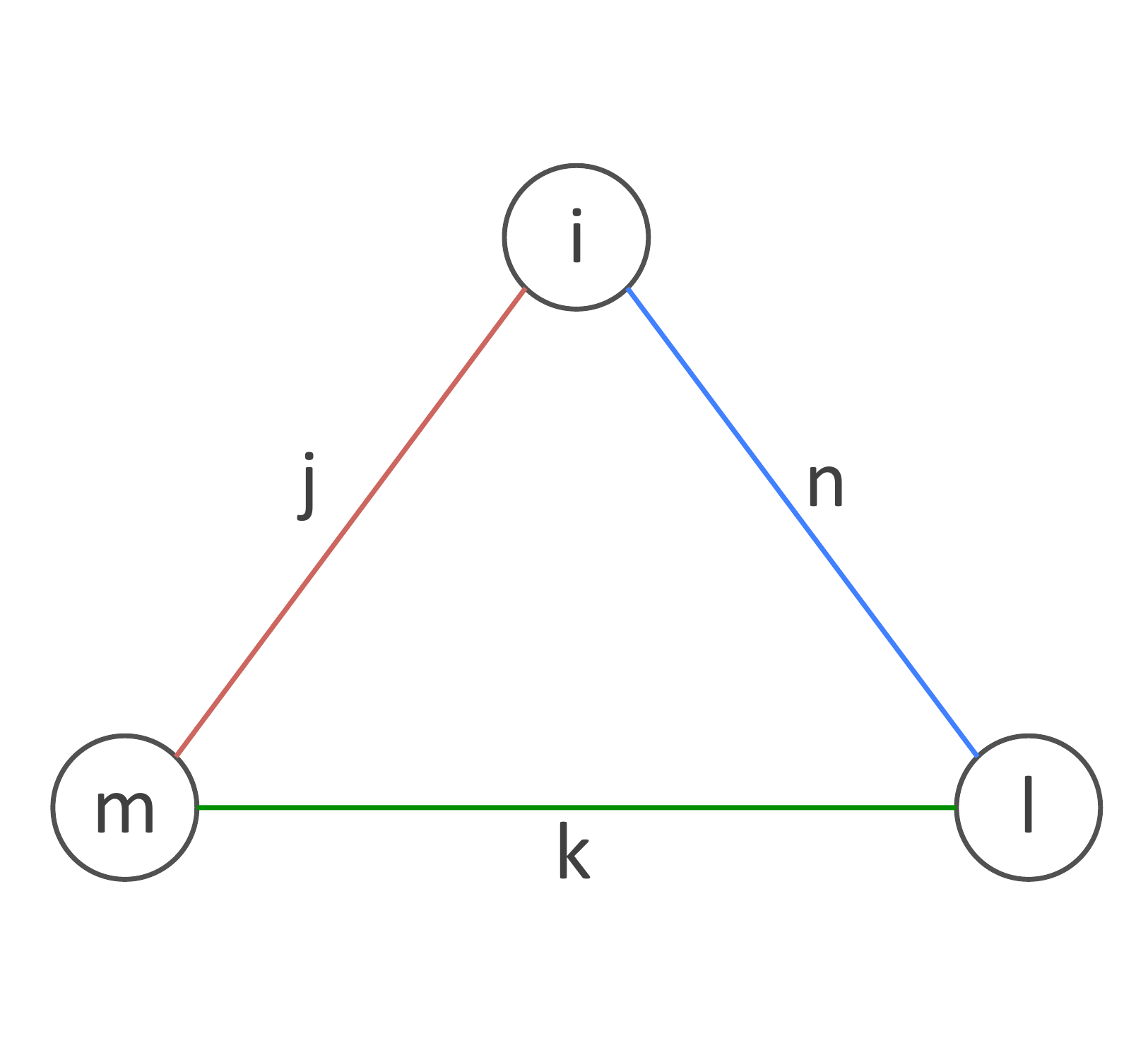}}
 \hspace{0.1in}
\subfigure[Example of four users.]{\label{fig:associate4}\includegraphics[width=4cm, height=4cm]{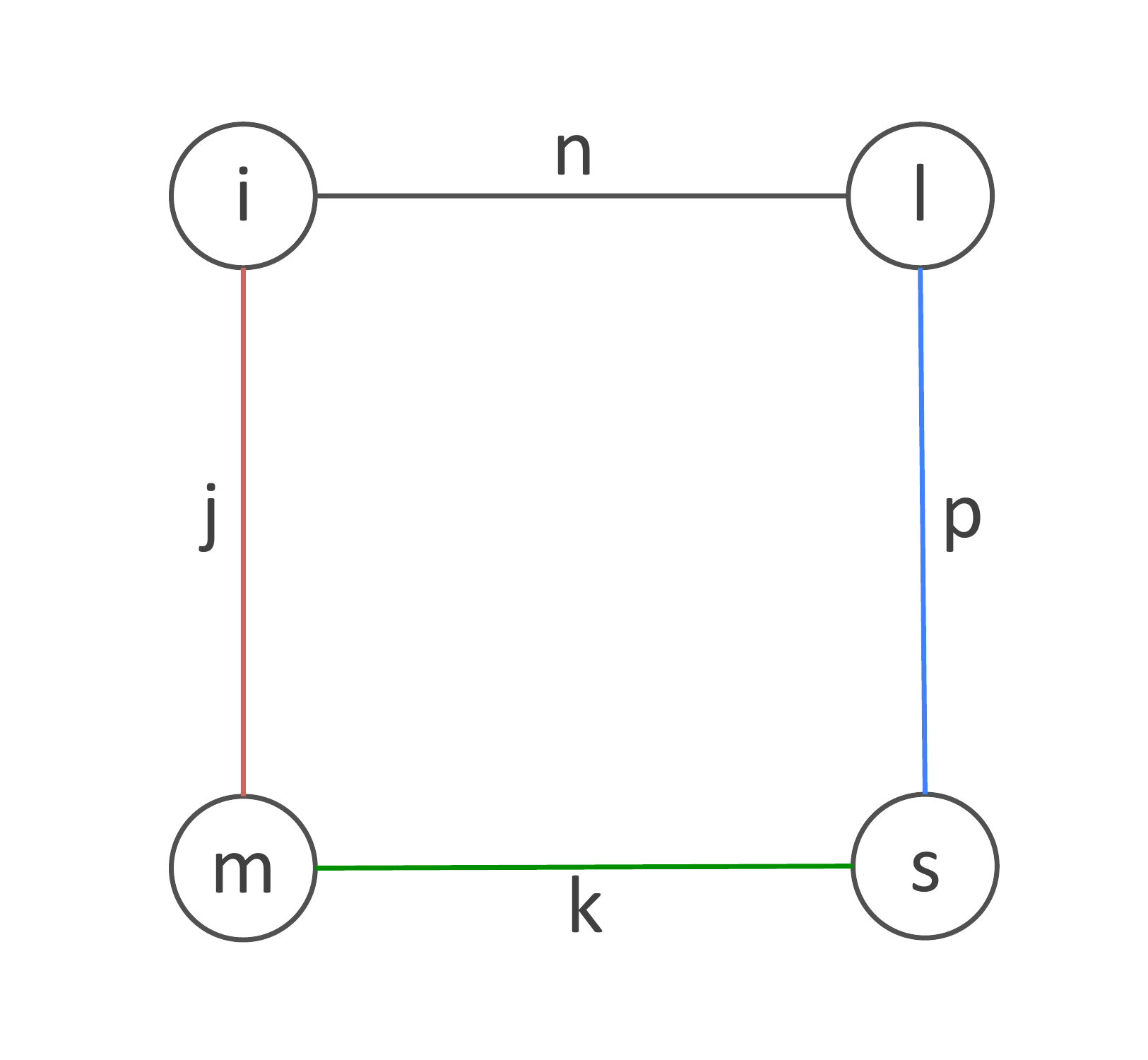}}
\caption{Examples of graph representation which illustrates the fractional user association.}
\label{fig:association}
\end{figure}

 
The convex optimization (\ref{eq:abscvx}) has differentiable objective and constraint functions, and satisfies Slater's condition. Therefore, the KKT conditions provide necessary and sufficient conditions for the optimality  \cite{BoyVan04}. Applying  the KKT conditions to problem (\ref{eq:abscvx}), we have the following proposition.

\begin{prop}\label{prop:kkt}
In the optimal solution, the number of users which are served by multiple BSs in normal RBs is at most~$N_B-1$. In blank RBs, the number of users associated with multiple BSs is at most~$N_B-N_{B_1}-1$.
\end{prop}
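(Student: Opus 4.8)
The plan is to read the structure of the optimal support off the KKT conditions and then run a counting argument on a bipartite ``support graph.'' First I would form the Lagrangian of (\ref{eq:abscvx}), with multipliers $\mu_j\ge 0$ for $\sum_i x_{ij}\le 1-z$, $\nu_j\ge 0$ for $\sum_i y_{ij}\le z$, and $\lambda_{ij},\lambda'_{ij}\ge 0$ for $x_{ij},y_{ij}\ge 0$. Stationarity gives $c_{ij}^{(n)}/R_i^*-\mu_j+\lambda_{ij}=0$ and $c_{ij}^{(b)}/R_i^*-\nu_j+\lambda'_{ij}=0$, so at any active interior variable, $x_{ij}^*>0\Rightarrow c_{ij}^{(n)}=\mu_j R_i^*$ and $y_{ij}^*>0\Rightarrow c_{ij}^{(b)}=\nu_j R_i^*$. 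Since $R_i^*>0$ and $c_{ij}^{(n)}>0$, any active $\mu_j$ is strictly positive (so the ratios below are well defined), and complementary slackness then makes the corresponding load constraint tight. Because $c_{ij}^{(b)}=0$ for every macro BS $j\in\mathcal{B}_1$, an optimal solution may be taken with $y_{ij}^*=0$ for all $j\in\mathcal{B}_1$, so only the $N_B-N_{B_1}$ non-macro BSs are ever active in the Off phase.

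Next I would encode the association as a bipartite graph. Let $H_n$ have the served users on one side and the active BSs on the other, with an edge $(i,j)$ whenever $x_{ij}^*>0$, and define $H_b$ analogously from $y_{ij}^*>0$. A user served by a single BS contributes one edge while a fractional user contributes at least two; thus if $U_n$ users are served in the On phase and $f_n$ of them are fractional, then $|E(H_n)|\ge U_n+f_n$. The whole bound will follow from one graph fact: if $H_n$ is a forest with $c$ connected components and $B_n\le N_B$ active BSs, then $|E(H_n)|=|V(H_n)|-c=(U_n+B_n)-c$. Combining the two edge counts yields $f_n\le B_n-c\le N_B-1$, and the identical argument on $H_b$ with $B_b\le N_B-N_{B_1}$ active BSs gives $f_b\le N_B-N_{B_1}-1$.

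The crux, and the step I expect to be the real obstacle, is establishing that the support graph is acyclic. Consider a cycle $i_1-j_1-i_2-j_2-\cdots-i_k-j_k-i_1$ in $H_n$. The relation $c_{ij}^{(n)}=\mu_j R_i^*$ holds on every edge of the cycle, so in the alternating product of coefficients around the cycle every $\mu_j$ and every $R_i^*$ cancels in pairs, forcing $\prod_{t}\left(c_{i_{t+1}j_t}^{(n)}/c_{i_t j_t}^{(n)}\right)=1$. This is a nontrivial multiplicative identity among coefficients that are continuous functions of the continuously distributed channel gains, so it holds with probability zero; hence almost surely no cycle exists and $H_n$ is a forest, consistent with the almost-sure uniqueness of Proposition \ref{prop:uniquesol}. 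Deterministically, the same identity is precisely the closing condition that lets one cancel the cycle: perturb its edges by alternating increments $\pm a_t$, with magnitudes chosen so that at each user node the two perturbed terms keep $R_i^*$ fixed and at each BS node they keep the load $\sum_i x_{ij}$ fixed; feasibility and the objective value are then preserved, and scaling the perturbation until the first edge reaches $0$ strictly reduces the support. Iterating drives $H_n$ (and $H_b$) to a forest without changing the optimum, after which the counting argument applies. The only point needing care is variables pinned at an upper bound, where the clean relation $c_{ij}^{(n)}=\mu_j R_i^*$ is replaced by an inequality; these arise only in the degenerate case $z=0$ with a BS fully dedicated to a single user, and since such users are not shared they create no cycle among fractional users and can be excluded from the count.
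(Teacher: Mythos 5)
Your proof is correct and rests on the same two pillars as the paper's: the KKT stationarity relation $c_{ij}^{(n)}=\mu_j R_i^*$ on the support, and the observation that a cycle in the association structure forces a nontrivial multiplicative identity among the $c_{ij}$'s that holds with probability zero, so the support is almost surely acyclic and a tree/forest edge count gives the bound. Where you differ is in the bookkeeping and in one genuine addition. The paper encodes the association as a graph on \emph{users} with BS-colored edges, must separately dispose of monochromatic cliques (users sharing one BS), and then contracts each clique to a node before counting tree edges; your bipartite user--BS support graph with the degree count $|E|\ge U_n+f_n$ versus the forest identity $|E|=U_n+B_n-c$ reaches $f_n\le B_n-c\le N_B-1$ more directly and makes the blank-RB bound $N_B-N_{B_1}-1$ fall out immediately from $B_b\le N_B-N_{B_1}$ (after the sensible convention of dropping zero-rate associations $y_{ij}>0$ with $c_{ij}^{(b)}=0$, which the paper leaves implicit in its ``similar conclusions'' remark). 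The second difference is your deterministic cycle-cancellation: by perturbing the cycle edges so as to preserve every $R_i^*$ and every BS load, you show that even on the measure-zero event where the closing identity holds, \emph{some} optimal solution has acyclic support, whereas the paper's conclusion is purely almost-sure. That addendum is a mild but real strengthening (at the cost of weakening ``the'' optimal solution to ``an'' optimal solution on that null event, which is harmless given the almost-sure uniqueness in Proposition~\ref{prop:uniquesol}). Both arguments share the same unproved genericity assertion that the algebraic identities among the $c_{ij}$'s occur with probability zero, so neither is more rigorous on that point.
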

\begin{proof}
We adopt similar techniques in \cite{GajHua09}. For completeness, we provide the proof as follows. We define the \textit{Lagrangian} associated with problem~(\ref{eq:abscvx})  as
\begin{equation}
\begin{aligned}
&L(x,y, z, \lambda,\nu)= -\sum\limits_{i\in \mathcal{U}} \log\left(\sum_{j\in\mathcal{B}} \left(x_{ij}c_{ij}^{(n)}+y_{ij}c_{ij}^{(b)}\right)\right) \\
&+\sum_{j\in\mathcal{B}} \lambda_j\left(\sum_{i\in \mathcal{U}} x_{ij}-z \right)+\sum_{j\in\mathcal{B}} \nu_j\left(\sum_{i\in \mathcal{U}} y_{ij}-(1-z) \right),
\end{aligned}
\end{equation}
where $\lambda_j$ and $\nu_j$ are the \textit{Lagrange multipliers} associated with the $j$th inequality constraint in normal and blank RBs in~(\ref{eq:abscvx}), respectively. The KKT conditions are:
\begin{equation}\label{eq:kkt}
\left\{
\begin{aligned}
&\frac{c_{ij}^{(n)}}{R_i}=\lambda_j, \text{ if } x_{ij} > 0,\\
&\frac{c_{ij}^{(b)}}{R_i}=\nu_j, \text{ if } y_{ij} > 0, \\
&{\sum_j \lambda_j}={\sum_j \nu_j}, \text{ if } z \in (0, 1),\\
&\ \sum_i x_{ij}\leq z, \  \lambda_j\left(\sum_i x_{ij}-z \right)=0,\\
&\sum_i y_{ij}\leq 1-z, \ \nu_j\left(\sum_i y_{ij}-(1-z) \right)=0,\\
& x_{ij}, y_{ij}, z\in [0,1], \lambda_j,\nu_j \geq 0\\
\end{aligned}
\right.
\end{equation}

We conduct analysis on normal RBs, and the same conclusion can be extended to the blank RBs. From KKT conditions (\ref{eq:kkt}), for $x_{ij}>0, x_{in}>0, x_{mj}>0$ and $x_{mn}>0$, we have
\begin{equation}
\frac{c_{ij}^{(n)}}{c_{in}^{(n)}}=\frac{\lambda_j}{\lambda_n}=\frac{c_{mj}^{(n)}}{c_{mn}^{(n)}},
\end{equation}
which is true with probability 0. Therefore, it is almost sure that any two users can share at most one same BS (i.e., the number of edges between any two nodes in  graph is at most 1). Similarly, we consider an example of three users, illustrated in Fig. \ref{fig:associate3}. There are three possible cases:
\begin{enumerate}
\item BSs $j,n,k$ are three different BSs:
We have
\begin{equation}
\frac{c_{mj}^{(n)}}{c_{mk}^{(n)}}=\frac{\lambda_j}{\lambda_n}\frac{\lambda_n}{\lambda_k}=\frac{c_{ij}^{(n)}}{c_{in}^{(n)}}\frac{c_{ln}^{(n)}}{c_{lk}^{(n)}},
\end{equation}
which is true with probability 0.
\item $n=k\neq j$:

The user $m$ is associated with BS $j$ and $n$, and the user $i$ is also associated with $j$ and $n$, which contradicts the result in the two-user example.

\item $j=n=k$:
It is possible that these three users are all associated with the same BS, where the representation graph becomes a complete graph.
\end{enumerate}

Therefore, a graph representation of three users contains either a loop with the same color or no loop. We can get a similar result for a graph with more than three users (e.g., Fig.~\ref{fig:associate4}). In conclusion, the users associated with the same BS constitute a complete graph with edges on the same color. We can generate a new graph, where each complete graph can be considered as a new node. The new graph has no loops and thus it has the maximal number of edges when it is a tree. The number of edges in a tree is one less than the number of nodes in the tree. Therefore, the maximal number of edges connecting different complete graphs is $N_B-1$. The number of users associated with more than one BSs equals the number of edges in the new graph, which is no more than $N_B-1$. We can get similar conclusions for the blank RBs.
\end{proof}

Although we relax the single association constraint, Proposition \ref{prop:kkt} indicates that the relaxed solution would be close to a binary association in practice. This implies the possibility to get a well-approximated near-optimal single association solution via rounding. From KKT conditions, we can also conclude Proposition~\ref{prop:sameass}, about the difference between resource allocations in normal and blank RBs.

\begin{prop}\label{prop:sameass}
The number of users which get resources from the same BS in normal and blank RBs is at most ~$N_B-N_{B_1}$.
\end{prop}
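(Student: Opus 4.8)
The plan is to combine the KKT stationarity conditions established in the proof of Proposition~\ref{prop:kkt} with a genericity (measure-zero) argument, after first pinning down which base stations can participate in the blank-RB phase at all. First I would note that every macro BS $j\in\mathcal{B}_1$ has $\sinr_{ij}^{(b)}=0$ and hence $c_{ij}^{(b)}=0$, which forces $y_{ij}=0$ for all such $j$; consequently a user can receive blank resources only from one of the $N_B-N_{B_1}$ small cells in $\mathcal{B}/\mathcal{B}_1$. A user that is served by a common BS in both phases must therefore be served by a \emph{small cell} in both phases, so it suffices to bound the number of (user, small-cell) incidences of this type.

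The core of the argument is a genericity claim. Fix a small cell $j\in\mathcal{B}/\mathcal{B}_1$ and suppose two distinct users $i\neq m$ both satisfy $x_{ij}>0,\,y_{ij}>0$ and $x_{mj}>0,\,y_{mj}>0$. The first two lines of (\ref{eq:kkt}) give $\lambda_j=c_{ij}^{(n)}/R_i$, $\nu_j=c_{ij}^{(b)}/R_i$, and likewise $\lambda_j=c_{mj}^{(n)}/R_m$, $\nu_j=c_{mj}^{(b)}/R_m$. Eliminating $R_i,R_m,\lambda_j,\nu_j$ leaves the single identity $c_{ij}^{(n)}c_{mj}^{(b)}=c_{mj}^{(n)}c_{ij}^{(b)}$. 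Since the spectral efficiencies of distinct users are continuous functions of independent channel realizations, this relation cuts out a measure-zero set — the same mechanism used for the two- and three-user cases in Proposition~\ref{prop:kkt} — and therefore holds with probability $0$. Taking a union over the finitely many triples (user, user, small cell) keeps the bad event null.

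Finally I would assemble these facts. Almost surely, each small cell is served-in-both-phases by at most one user, so mapping each user that shares a BS across the two phases to such a small cell yields an injective map (a.s.); hence the number of such users is at most the number of small cells, i.e.\ $N_B-N_{B_1}$. The main obstacle is the genericity step: I must argue carefully that $c_{ij}^{(n)}c_{mj}^{(b)}=c_{mj}^{(n)}c_{ij}^{(b)}$ is genuinely nondegenerate — that its two sides depend on the disjoint, continuously distributed channel variables of user $i$ versus user $m$, so the solution set really is null rather than an identity forced by the algebraic form of the SINR expressions. Everything else is light bookkeeping layered on top of the already-derived KKT system.
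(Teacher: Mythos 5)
Your proof is correct and follows essentially the same route as the paper's: the relation $c_{ij}^{(n)}c_{mj}^{(b)}=c_{mj}^{(n)}c_{ij}^{(b)}$ you derive by eliminating the multipliers is exactly the paper's KKT identity $c_{ij}^{(n)}/c_{ij}^{(b)}=\lambda_j/\nu_j=c_{mj}^{(n)}/c_{mj}^{(b)}$, dismissed as a probability-zero event, yielding that each BS serves at most one user in both normal and blank RBs. Your explicit preliminary step --- that $c_{ij}^{(b)}=0$ for every macro BS forces $y_{ij}=0$ and so restricts the count to the $N_B-N_{B_1}$ small cells --- is left implicit in the paper's writeup but is needed to obtain the bound $N_B-N_{B_1}$ rather than $N_B$, so spelling it out is a small tightening rather than a different argument.
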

\begin{proof}\label{pf:prop-sameass}
According to KKT conditions (\ref{eq:kkt}), if user $i$ and $m$ are associated with BS $j$ at both normal and blank RBs, we have
\begin{equation}
\frac{c_{ij}^{(n)}}{c_{ij}^{(b)}}=\frac{\lambda_j}{\nu_j}=\frac{c_{mj}^{(n)}}{c_{mj}^{(b)}},
\end{equation}
which is true with probability 0. Therefore, it is almost surely true that no more than two users can connect to a BS both in normal and blank RBs. In other words, each BS serves at most one user in both normal and blank RBs.
\end{proof}

\begin{remark}
Proposition \ref{prop:sameass} implies that the resource allocation in normal RBs is very different from the blank RBs. Only a small fraction of users keep the same association in both normal  and blank RBs.
\end{remark}

\section{Performance Evaluation}\label{sec:simulation}

\begin{table}
\caption{Simulation parameters}\label{tb:simu}
\begin{center}
\begin{tabular}{|l||c|}
\hline
Macrocell layout & Hexagonal grid\\
Pico/femtocell/UE  distribution & PPP\\
Density of macros & $1/500^2$ m$^{-2}$\\
Density of picos & $4/500^2$ m$^{-2}$\\
Density of femtos & $12/500^2$ m$^{-2}$\\
Density of cellular users & $80/500^2$ m$^{-2}$\\
\hline\hline
Transmit power of macros & $40$ W\\
Transmit power of picos & $1$ W\\
Transmit power of femtos & $0.1$ W\\
\hline
Noise  power & $-124$ dBm\\
Path loss exponent & $3.5$\\
Fading & Rayleigh \\
\hline
\end{tabular}
\end{center}
\end{table}

In this section, we provide simulation results to validate the proposed framework and analytical results. The main simulation parameters used are summarized in Table \ref{tb:simu} unless otherwise specified.

\begin{figure}
\centering
\setcounter{subfigure}{0}
\subfigure[Max-SINR association]{\label{fig:maxsinr}\includegraphics[width=8.5cm, height=5.2cm]{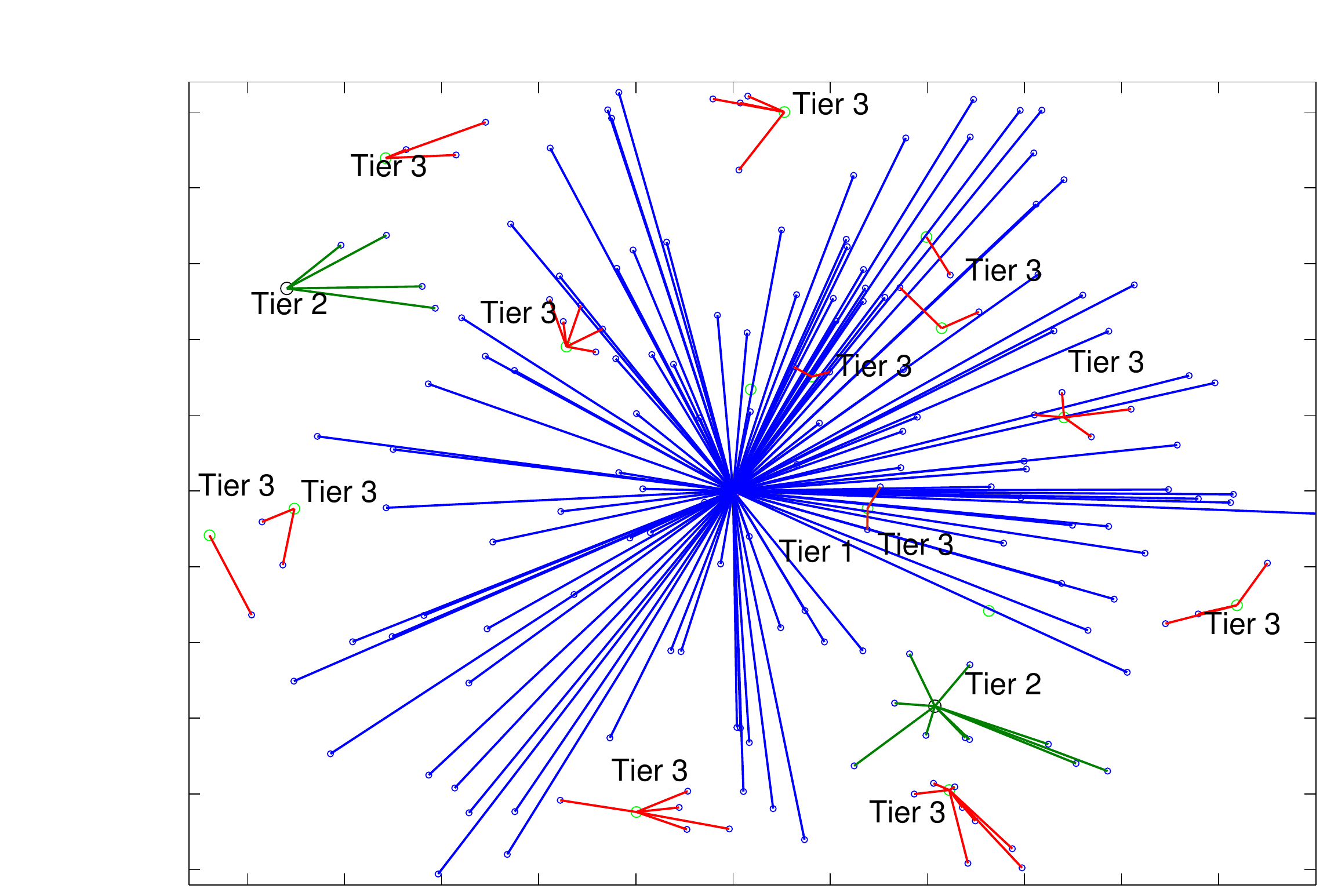}}
 \hspace{0.1in}
\subfigure[Load-aware association without blank resource.]{\label{fig:lbeg}\includegraphics[width=8.5cm, height=5.2cm]{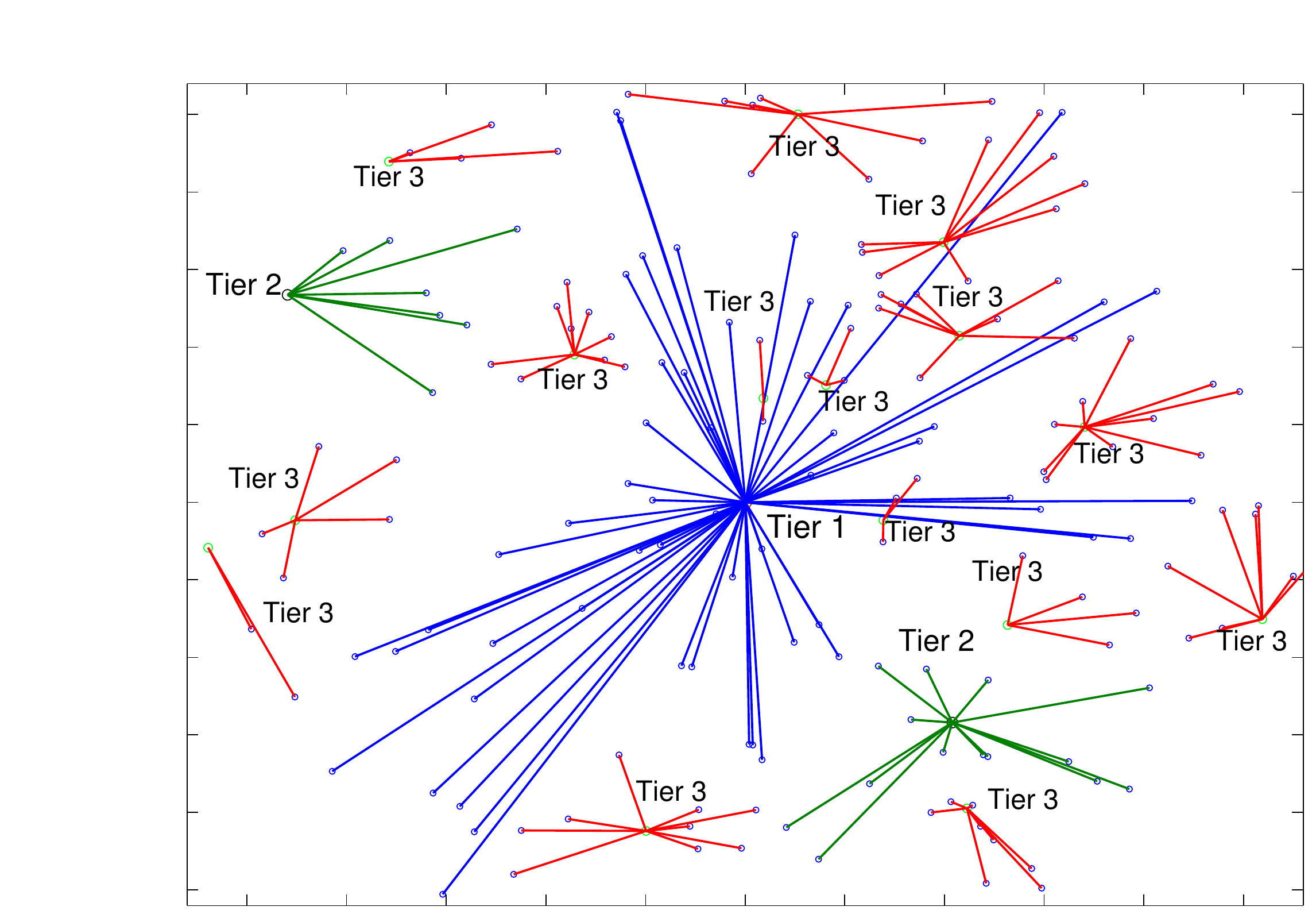}}
\subfigure[Optimal association with blank resource.]{\label{fig:abseg}\includegraphics[width=8.5cm, height=5.2cm]{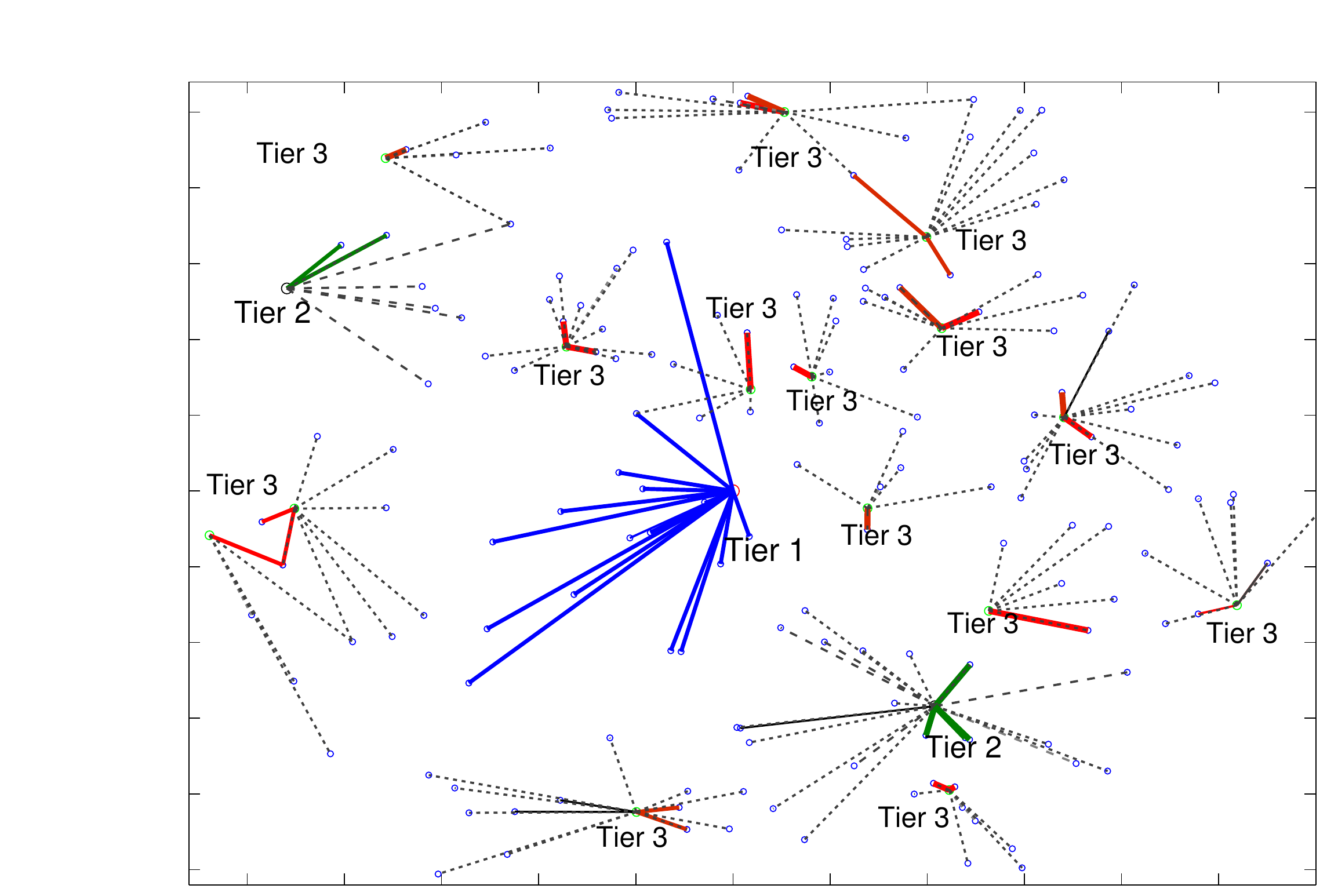}}
 \hspace{0.1in}
\caption{Examples of associations in HetNets with different association schemes. The dashed lines indicate the association in blank RBs, while the solid lines indicate the association in normal RBs. }
\label{fig:associationeg}
\end{figure}

Fig. \ref{fig:associationeg} shows examples of associations with different schemes. In the conventional user association scheme shown in Fig. \ref{fig:maxsinr}, the load is very unbalanced. Most of the users are associated with the macrocell, but may get small rates even with strong SINR. The load-aware user association in Fig.~\ref{fig:lbeg} achieves  more balanced load, and thus leads to a more efficient resource utilization. Adopting the resource blanking, users can be served in either the normal or/and blank RBs. The associations in blank and normal RBs are illustrated by dashed lines and solid lines in Fig. \ref{fig:abseg}, respectively. We can verify our propositions that the number of fractional users is very small, and the associations in blank and normal RBs are very different. More users would be served by small cells in blank RBs, where the strong macrocell interference can be avoided.

\begin{figure}
\centering
\includegraphics[width=8cm, height=6cm]{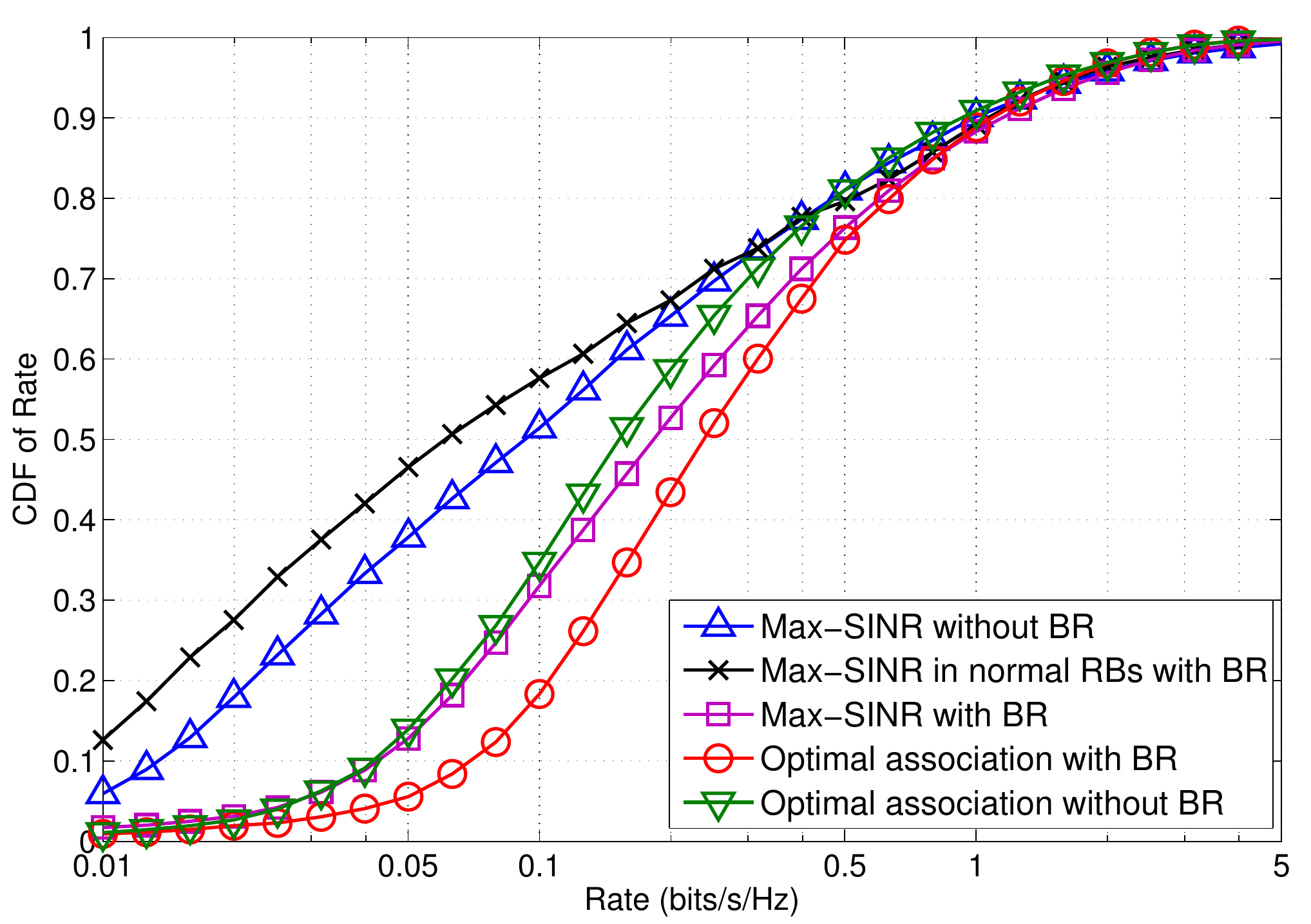}
\caption{The rate distribution of users with different association schemes.}
\label{fig:sinr}
\end{figure}

\begin{figure}
\centering
\setcounter{subfigure}{0}
\subfigure[Load of optimal association with resource blanking in blank RBs vs. load of optimal assocation without resource blanking.]{\label{fig:loadabs}\includegraphics[width=8.4cm, height=6cm]{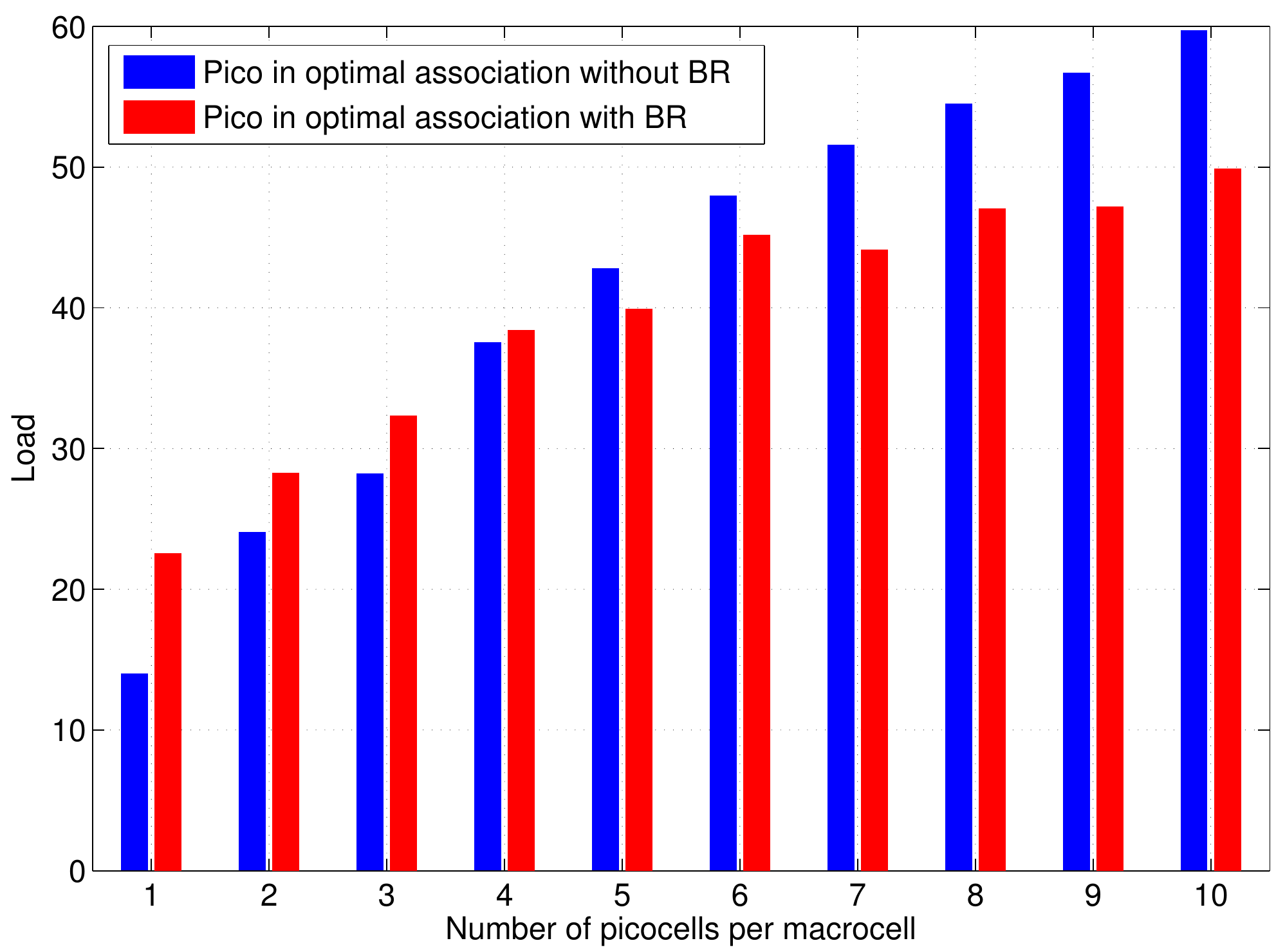}}
 \hspace{0.1in}
\subfigure[Load of optimal association with resource blanking in noraml RBs vs. load of optimal assocation without resource blanking.]{\label{fig:loadnormal}\includegraphics[width=8.4cm, height=6cm]{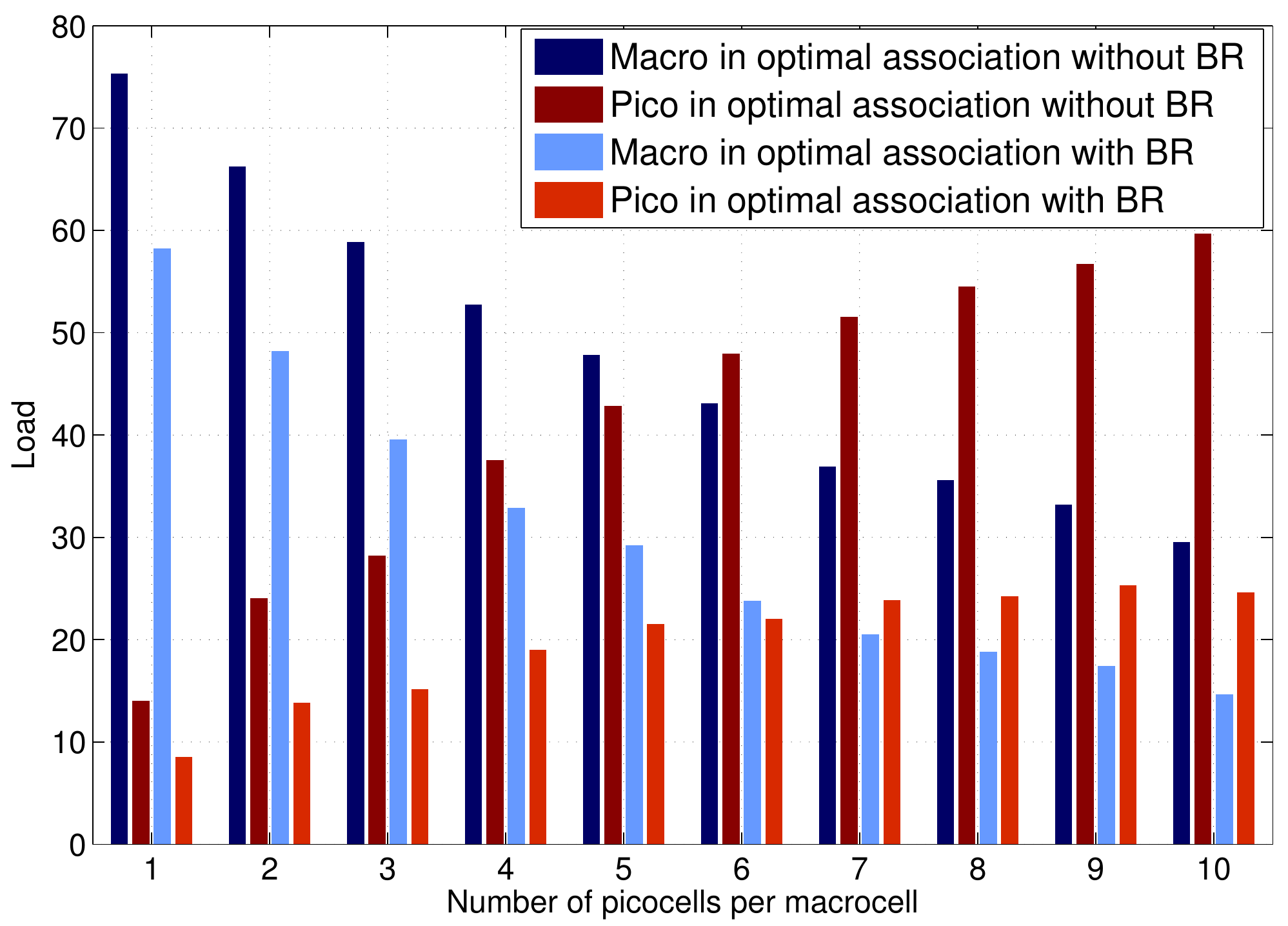}}
\caption{The relationship between load and small cell density in a two-tier HetNet.}
\label{fig:load}
\end{figure}

The performance of  a three-tier network using different association schemes is compared in Fig. \ref{fig:sinr}. We compare five different association approaches, among which the ``Max-SINR in normal RBs with BR'' is a scheme where the association is based on the signal received in normal RBs and the association in blank RBs is kept the same, even though some BSs are turned off. By jointly adopting BR, the load-aware association further improves the network performance (e.g., 5x gain for worst 5\% users compared to Max-SINR without BR). Fig. \ref{fig:sinr} also indicates the importance of appropriate association in networks with BR. By adopting BR with inappropriate association (e.g., Max-SINR association based on the received SINR in normal RBs), the network performance may even be degraded. On the other hand, with appropriate association (need not be optimal), the gain can be significant (e.g., 3x gain for worst 5\% users adopting Max-SINR with BR compared to Max-SINR without BR, and 5x gain compared to Max-SINR in normal RBs with BR).

\begin{figure}
\centering
\includegraphics[width=8.4cm, height=6.0cm]{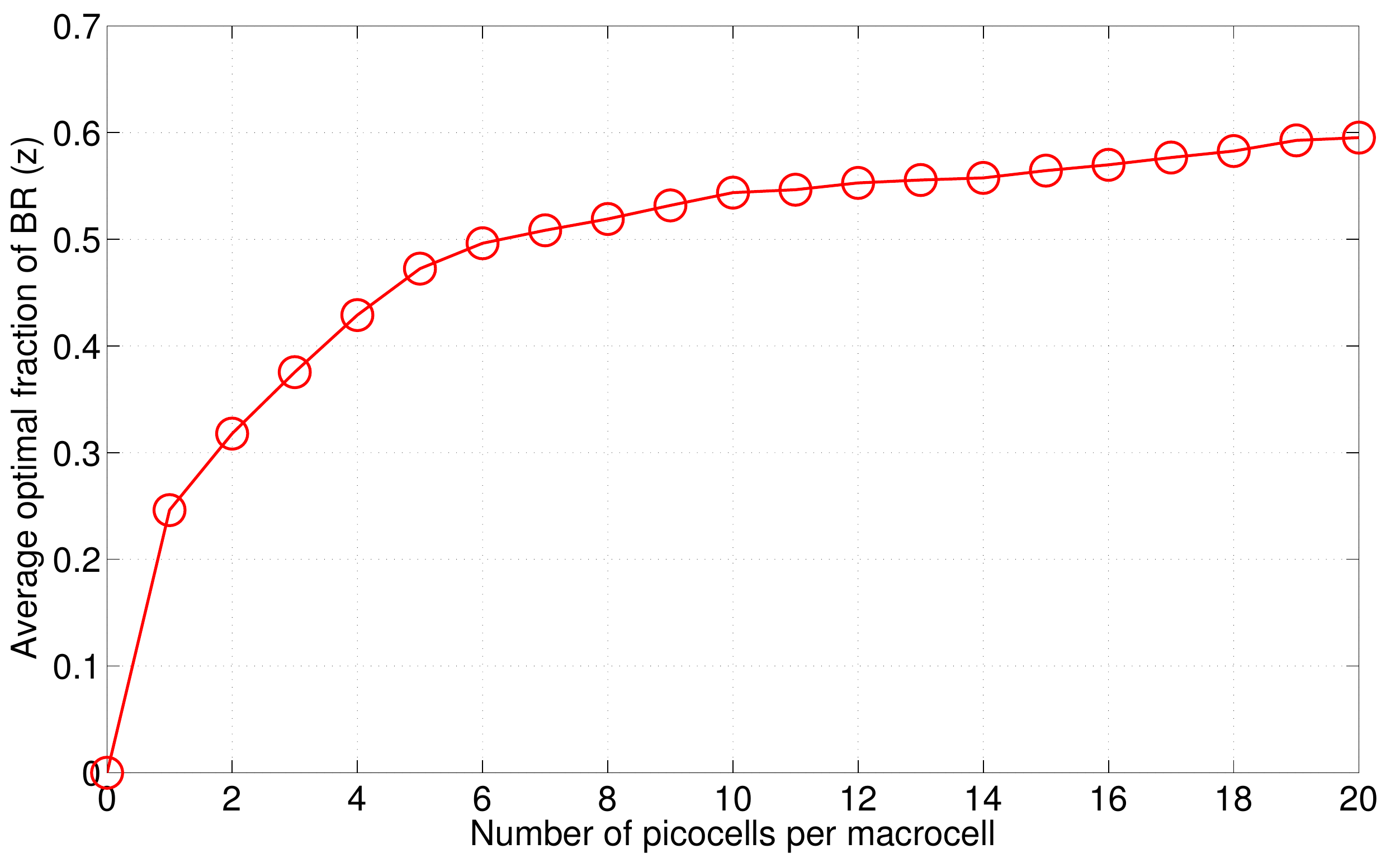}
\caption{The average of optimal fraction of blank RBs (i.e., $z$) vs. density of small cells.}
\label{fig:abs}
\end{figure}

To investigate the impact of different densities of small cells in HetNets, we consider a two-tier network consisting of macrocells and picocells. The fraction of BR in different network settings is shown in Fig. \ref{fig:abs}, where we average the optimal fraction of BR over different realizations. In Fig. \ref{fig:abs}, the fraction of BR increases when the small cells become denser. 

We compare the load in optimal resource allocation with BR to the optimal association without eICIC in Fig. \ref{fig:load}. With the increase of picocells, the load of macrocells keeps decreasing and more users are pushed off to small cells, as shown in Fig. \ref{fig:loadnormal}. Note that adopting the BR approach, a user can be served in blank and/or normal RBs. We have proven that the associations in blank and normal RBs are very different. From simulation, we observe that only a very small fraction of users are served both in blank and normal RBs, as illustrated in Fig. \ref{fig:abseg}. While more users are pushed off to small cells in both approaches (optimal resource allocation with BR and  without BR) when the density of picocells increases, Fig. \ref{fig:loadabs} shows that fewer users are served by small cells in blank RBs compared to the optimal association without eICIC. One possible reason is that as picocells become denser,  many users served by small cells in normal RBs already have a good enough rate, so the gain from turning off macro BSs decreases, which provide less motivation to push off users to picocells in blank RBs. The diminishing gain can also be observed in Fig.~\ref{fig:throughput} as the density of picocell increases.

In Fig. \ref{fig:throughput}, we show the throughput gain of cell-edge users in different network deployments. The gain is compared to the optimal association without eICIC (i.e., $\frac{T_{a}-T_{n}}{T_{n}}$, where $T_{a}$ and $T_{n}$ are the throughput of worst $10\%$ users using optimal resource allocation with BR and without BR, respectively). Different from the gain compared to Max-SINR, the gain here implies the potential impact of BR on the performance improvement. When the picocells become denser, it is more necessary to turn off the macrocells, but the gain from BR decreases. In a sparse network, the main interference is from macrocells,  and thus the potential gain in SINR by turning off macro BSs is large. When the network is increasingly dense, the aggregate interference from small cells keeps increasing and may even overwhelm the interference from macrocells. In this case, though there is still gain from BR,  the SINR improvement of users in small cells decreases due to the large interference from other small cells. Therefore, the gain depends on the aggregate interference from small cells, and thus depends mainly on the transmit power of small cells. Since femtocells have much lower power, the gain keeps increasing as the density of femtocells increases, which is quite different from picocells. 

\begin{figure}
\centering
\includegraphics[width=8.5cm, height=6.0cm]{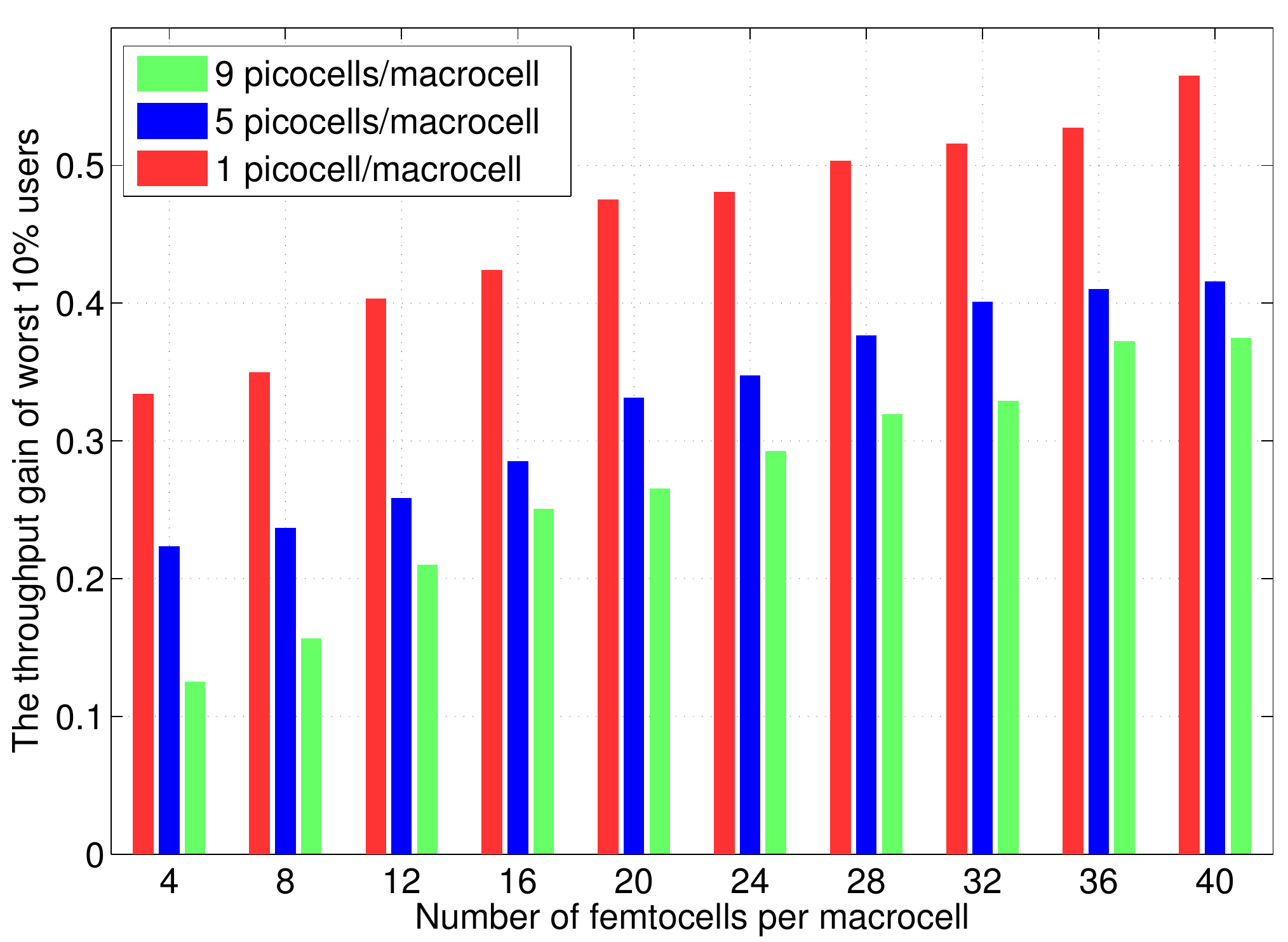}
\caption{The throughput gain of worst 10\% users in networks with different deployments of small cells.}
\label{fig:throughput}
\end{figure}

\section{Conclusion}\label{sec:conclusion}
In this paper, we propose a novel framework for joint optimization of BR and cell association in HetNets, which provides a large gain in network performance, in particular ``cell-edge'' rate. We formulated a network-wide utility maximization problem, which is converted to a convex optimization by single association relaxation. Although we allow users to be jointly served by multiple BSs, we proved that the number of fractional users is very small (at most $N_B-1$), and the simulations show that most users have single association. We also showed that association is  very different when adopting BR. Broadening the framework to more general settings (e.g., asynchronous configuration), analysis of the loss from fractional association to single association, investigating the gap between a simple biasing approach and the joint optimal solution, and efficient algorithms are left for future~work.


\bibliographystyle{ieeetr}
\bibliography{absbib}
\end{document}